\def\d{{\rm d}}
\def\i{\mbox{\rm i}}
\newcommand{\deriv}[3][]{\frac{\d^{#1}{#2}}{{\d{#3}}^{#1}}}
\def\a{\alpha}
\def\b{\beta}
\def\ph{\varphi}
\def\th{\vartheta}
\def\p{Painlev\'{e}}
\def\peq{\p\ equation}
\def\peqs{\p\ equations}
\def\sch{Schr\"{o}dinger}
\def\bt{B\"{a}cklund transformation}
\def\bts{B\"{a}cklund transformations}
\def\Ep{Ermakov-\p}
\def\PI{${\rm P}_{\rm{I}}$}
\def\PII{${\rm P}_{\rm{II}}$}
\def\PIV{${\rm P}_{\rm{IV}}$}
\def\PVI{${\rm P}_{\rm{VI}}$}
\def\SIV{${\rm S}_{\rm{IV}}$}
\def\erfc{\mathop{\rm erfc}\nolimits}
\def\Cases#1{\begin{cases}#1\end{cases}}
\newtheorem{theorem}{Theorem}[section]
\newtheorem{lemma}[theorem]{Lemma}
\newtheorem{corollary}[theorem]{Corollary}
\theoremstyle{definition}
\newtheorem{remark}[theorem]{Remark}
\def\beq{\begin{equation}}
\def\eeq{\end{equation}}
\newcommand{\hide}[1]{}
\def\etal{\textit{et al.}}
\numberwithin{equation}{section}
\numberwithin{theorem}{section}
\numberwithin{figure}{section}
\begin{document}
	
	\title{On Integrable Ermakov-Painlev\'{e} IV Systems}
	
	\author{Colin Rogers~$^1$  Andrew P. Bassom${}^2$ and Peter A.\ Clarkson~$^3$\\[5pt]
$^1$~School of Mathematics \& Statistics, The University of New South Wales, \\Sydney, NSW2052, Australia\\
\texttt{c.rogers@unsw.edu.au} \\[5pt]
$^2$~School of Mathematics and Physics, University of Tasmania,\\ Private Bag 37, Hobart, Tasmania 7001, Australia\\
\texttt{andrew.bassom@utas.edu.au}\\[5pt]
$^3$~School of Mathematics, Statistics \& Actuarial Science,\\ University of Kent, Canterbury, CT2 7FS, UK\\
\texttt{P.A.Clarkson@kent.ac.uk}} 
	
	\maketitle
	
\begin{abstract}
Novel hybrid Ermakov-Painlev\'{e} IV systems are introduced and an associated Ermakov invariant is used in establishing their integrability. B\"{a}cklund transformations are then employed to generate classes of exact solutions via the linked canonical Painlev\'{e} IV equation.
\end{abstract}

\section{Introduction}

The six classical \p\ equations, commonly denoted by \PI--\PVI, arise in a wide range of physical applications and have played a fundamental role in modern soliton theory. Detailed accounts of their properties may be found in \cite{pac05,pac06,rc99,refGLS02} 
together with the literature cited therein. In particular, the fourth \p\ equation (\PIV) which is connected to the new integrable systems to be introduced here adopts the canonical form
\begin{equation} \label{eq:PIV}
 \deriv[2]{\omega}{z} = \frac1{2\omega}\left(\deriv{\omega}{z}\right)^2 + \frac{3}{2}\omega^3 + 4z \omega^2 + 2 (z^2 - \a) \omega + \frac{\b}{\omega},
\end{equation}
where $\a$ and $\b$ are {in general arbitrary parameters, possibly complex. In the present context they are taken as real}. This nonlinear integrable equation has been shown to have important applications such as hydrodynamics \cite{bch96}, nonlinear optics \cite{refFG}, quantum gravity theory \cite{refFIKa,refFIKb}, 
supersymmetric quantum mechanics \cite{refBerm12,refBCF14,refBF11,refBF12,refFG15,refMar09,refMateo}
random matrices \cite{refFW01,refFW03,refKanz,refOK08,refTW94}, 
orthogonal polynomials \cite{refCF06,refCP05,refCJ14,refCJK,refDK09},
exceptional orthogonal polynomials \cite{refGG,refGGM,refGKKM,refMarQ13,refMarQ16} 
and vortex dynamics \cite{pac09}.
Associated with \PIV\ \eqref{eq:PIV} is the second-degree, second-order $\sigma$-equation, usually known as \SIV\ (or the \PIV\ $\sigma$-equation)
\beq\label{eq:Siv}
\left(\deriv[2]{\sigma}z\right)^{2} - 4\left(z\deriv{\sigma}z-\sigma\right)^{2} +4\deriv{\sigma}z\left(\deriv{\sigma}z+2\th_0\right)\left(\deriv{\sigma}z+2\th_{\infty}\right)=0,
\eeq
where $\th_0$ and $\th_{\infty}$ are arbitrary parameters, due to Jimbo and Miwa \cite{refJMi} and Okamoto \cite{refOkamoto80a,refOkamoto80b,refOkamotoPIIPIV}, whose solutions are equivalent to those of \PIV\ \eqref{b8}.

Nonlinear coupled systems of Ermakov-Ray-Reid type, on the other hand, namely (see e.g.\ \cite{jr80,jrjr80,crchjr93,crws96,wscrab96})
\begin{subequations}\label{a2}\begin{align}
 \deriv[2]{\phi}{z} + \Theta(z) \phi &= \frac{1}{\phi^2 \psi}\ \Phi (\psi/\phi), \\
 \deriv[2]{\psi}{z}+ \Theta(z) \psi &= \frac{1}{\psi^2 \phi}\ \Psi (\phi/\psi),
 \end{align}
\end{subequations}
with their distinctive integral of motion
\begin{equation*} 
\mathcal{I} = \tfrac12\left( \phi \deriv{\psi}{z} - \psi \deriv{\phi}{z}\right)^2 + \int^{u=\psi/\phi} \Phi(u) \,\d u + \int^{v=\phi/\psi} \Psi(v) \,\d v,
 \end{equation*}
likewise display a diverse range of physical applications as described in \cite{agylaspsst91,crha10,crbmha12,crbmkcha10,crws11,crws2011,wshacr13}. 

It is well known that solitonic systems generically admit invariance under \bts\ which have associated nonlinear superposition principles{, namely,} permutability theorems {(see e.g. \cite{crws82,crws02})}. \bts\ generically admitted by the \p\ equations such as set down in \cite{vg99} enable the iterative generation of sequences of exact solutions of associated overarching solitonic equations. Ermakov-Ray-Reid systems likewise possess nonlinear superposition laws, albeit of another kind \cite{jr80,jrjr80}. Moreover, just as solitonic systems and their associated \p\ symmetry reductions admit linear representations, so Ermakov-Ray-Reid systems have also been shown to possess underlying linear structure in \cite{cacrurao90}. Despite these remarkable commonalities, the studies of \p\ and Ermakov-type systems have preceded independently until recently.

{The notions of S-integrable and C-integrable nonlinear equations with their important associated universal aspects is due to Calogero (see e.g. \cite{fc93} and literature cited therein). The conjugation of such S-integrable equations which are amenable to the inverse scattering transform and nonlinear C-integrable equations which may be reduced to tractable linear canonical forms, typically via B\"acklund, gauge or reciprocal transformations, constitutes a novel topic with origin in work on generalised Ermakov systems as introduced in \cite{wscrab96}. Therein, it was established that an S-integrable 2+1-dimensional extension due to Schief \cite{ws94} of the Ernst equation of general relativity, incorporates a C-integrable generalised Ermakov system. The present work has its genesis in recent developments in \cite{cr14,cr2014}, where prototype integrable hybrid Ermakov-Painlev\'e II and Ermakov-Painlev\'e IV-type systems were derived via wave packet symmetry reductions of two classes of physically important resonant nonlinear Schr\"odinger systems.} The latter involve de Broglie-Bohm quantum potential terms and under certain circumstances they admit novel fission or fusion phenomena \cite{jlopcrws07,opjl02,opjlcr08}.

Dirichlet two-point boundary value problems for both the \Ep\ II and \Ep\ IV equations have been recently investigated in \cite{pacr15,pacr2015}. In \cite{pacr15}, the hybrid \Ep\ II equation was linked to its integrable \p\ XXXIV avatar 
in the context of a three-ion reduction of the classical Nernst-Planck system as derived in \cite{rccrws07}. On setting $\Omega=\omega^{1/2}$, with $\omega>0$,
in \PIV\ \eqref{eq:PIV}, one obtains the hybrid \Ep\ IV type equation
\begin{equation} \label{a4}
\deriv[2]{\Omega}{z}= \left[ \tfrac34\Omega^4 + 2z \Omega^2 + z^2 - \a \right] \Omega+ \frac{\b}{2\Omega^3}, \end{equation}
which encapsulates positive solutions of \PIV\ \eqref{eq:PIV}. It may be derived, in particular, via symmetry reduction of derivative nonlinear \sch\ equations of a type that arise in plasma physics, cf.~\cite{refCC87}. 
The Ermakov invariant admitted by coupled pairs of nonlinear equations of the type (\ref{a4}) has been shown in \cite{pacr15} to allow their integration in terms of \PIV\ \eqref{eq:PIV}. 

In the present paper a novel Ermakov-Ray-Reid type system is introduced and its integrable structure established via a Ermakov invariant relation together with a linked canonical \PIV\ equation.

\section{Integrability of the \Ep\ IV System}

The preceding motivates the introduction here of the hybrid \Ep\ IV system 
\begin{align*}
\deriv[2]{\Phi_1}{z} - \left[ \tfrac34(\Phi^2_1 + \Phi^2_2)^2 + 2z ( \Phi^2_1 + \Phi^2_2 ) + z^2 - \a \right] \Phi_1 &= \frac{1}{\Phi^2_1 \Phi_2} S \left(\frac{\Phi_2}{\Phi_1}\right), \\
\deriv[2]{\Phi_2}{z} - \left[ \tfrac34(\Phi^2_1 + \Phi^2_2)^2 + 2z (\Phi^2_1 + \Phi^2_2) + z^2 - \a \right] \Phi_2 &= \frac{1}{\Phi^2_2 \Phi_1} T \left(\frac{\Phi_1}{\Phi_2}\right). 
 \end{align*}
The integrability of this coupled nonlinear system is investigated when there exists $V(\Phi_1, \Phi_2)$ such that
\begin{equation*} 
\frac{1}{\Phi^2_1 \Phi_1} S \left(\frac{\Phi_2}{\Phi_1}\right) = \frac{\partial V}{\partial \Phi_1} ,\qquad \frac{1}{\Phi^2_2 \Phi_1} T \left(\frac{\Phi_1}{\Phi_2}\right) = \frac{\partial V}{\partial \Phi_2}, \end{equation*}
in which case, it may be shown, as in the manner described in \cite{crha10} for the standard Ermakov-Ray-Reid system, that it adopts the form
\begin{subequations} \label{b3}
\begin{align} \label{b3a}
\deriv[2]{\Phi_1}{z} &- \left[ \tfrac34(\Phi^2_1 + \Phi^2_2)^2 + 2z ( \Phi^2_1 + \Phi^2_2 ) + z^2 - \a \right] \Phi_1
= \frac{2}{\Phi^3_1} J \left(\frac{\Phi_2}{\Phi_1}\right) + \frac{\Phi_2}{\Phi^4_1} J' \left(\frac{\Phi_2}{\Phi_1}\right), \\
\deriv[2]{\Phi_2}{z} &- \left[ \tfrac34(\Phi^2_1 + \Phi^2_2)^2 + 2z (\Phi^2_1 + \Phi^2_2) + z^2 - \a \right] \Phi_2 = - \frac{1}{\Phi^3_1} J' \left(\frac{\Phi_2}{\Phi_1}\right), \label{b3b}
 \end{align}
\end{subequations}
%
%
where the prime denotes a derivative with respect to the argument $\Phi_2/\Phi_1$ in $J$. This system admits the Ermakov invariant
\begin{equation} \label{b4}
\mathcal{I} = \tfrac12\left(\Phi_1 \deriv{\Phi_{2}}{z} - \deriv{\Phi_{1}}{z} \Phi_2\right)^2 + \left( \frac{\Phi^2_1 + \Phi^2_2}{\Phi^2_1} \right) J \left(\frac{\Phi_2}{\Phi_1}\right), \end{equation}
and, on use of the identity
\begin{align*} \label{b5}
(\Phi^2_1 + \Phi^2_2) \left[\left(\deriv{\Phi_{1}}{z}\right)^2 + \left(\deriv{\Phi_{2}}{z}\right)^2\right] &- \left(\Phi_1 \deriv{\Phi_{2}}{z} - \deriv{\Phi_{1}}{z} \Phi_2\right)^2 
\equiv \left(\Phi_1 \deriv{\Phi_{1}}{z} + \Phi_2 \deriv{\Phi_{2}}{z}\right)^2, \end{align*}
it follows that
\begin{equation*}
\left(\deriv{\Phi_{1}}{z}\right)^2 + \left(\deriv{\Phi_{2}}{z}\right)^2 - \frac{2 \mathcal{I}}{\Sigma} + \frac{2}{\Phi^2_1} J \left(\frac{\Phi_2}{\Phi_1}\right) = \frac{1}{4\Sigma}\left(\deriv{\Sigma}{z}\right)^2 ,\end{equation*}
where $\Sigma\equiv\Phi^2_1+\Phi^2_2$. Hence
\begin{align} 
2\left( \deriv{\Phi_{1}}{z} \deriv[2]{\Phi_1}{z} + \deriv{\Phi_{2}}{z} \deriv[2]{\Phi_2}{z}\right) &+ 
\frac{2 \mathcal{I}}{\Sigma^2}\deriv{\Sigma}{z} + \deriv{}{z}\left[ \frac{2}{\Phi^2_1} J \left(\frac{\Phi_2}{\Phi_1}\right) \right]
= \frac{1}{2\Sigma}\deriv{\Sigma}{z}\deriv[2]{\Sigma}{z} - \frac{1}{4\Sigma^2}\left(\deriv{\Sigma}{z}\right)^3, \label{b6}\end{align}
where the system \eqref{b3} shows that
\begin{align*}
\deriv{\Phi_{1}}{z} \deriv[2]{\Phi_1}{z} + \deriv{\Phi_{2}}{z} \deriv[2]{\Phi_2}{z} - \tfrac12 \Delta \deriv{\Sigma}{z} &= \frac{2}{\Phi^3_1} \deriv{\Phi_{1}}{z}J + \left(\frac{\Phi_2}{\Phi^4_1} \deriv{\Phi_{1}}{z} - \frac{1}{\Phi^3_1} \deriv{\Phi_{2}}{z}\right)J 
= - \deriv{}{z}\left[ \frac{1}{\Phi^2_1} J \left(\frac{\Phi_2}{\Phi_1}\right) \right],
\end{align*}
with $\Delta: = \tfrac34\Sigma^2 + 2 \Sigma + z^2 - \a$. Accordingly, (\ref{b6}) shows that reduction is obtained to the canonical \PIV\ equation in $\Sigma$, namely
\begin{equation} \label{b8}
\deriv[2]{\Sigma}{z} = \frac{1}{2\Sigma}\left(\deriv{\Sigma}{z}\right)^2 + \frac{3}{2} \Sigma^3 + 4z \Sigma^2 + 2 (z^2 - \a) \Sigma + \frac{\b}{\Sigma}, \end{equation}
with $\b=4\mathcal{I}$.

To determine the original Ermakov variables $\Phi_1$ and $\Phi_2$ corresponding to a known positive solution $\Sigma$ of \PIV\ \eqref{b8}, one returns to the Ermakov invariant relation (\ref{b4}). Thus, on introduction of $\Lambda$ according to
\begin{equation} \label{b9}
\Lambda = \frac{2 \Phi_1 \Phi_2}{\Phi^2_1 + \Phi^2_2}, \end{equation}
it is observed that
\begin{equation*}
\deriv{\Lambda}{z} = \frac{2 (\Phi^2_1 - \Phi^2_2)}{(\Phi^2_1 + \Phi^2_2)^2}\left(\Phi_1 \deriv{\Phi_{2}}{z} - \deriv{\Phi_{1}}{z} \Phi_2\right), \end{equation*}
whence (\ref{b4}) yields
\begin{equation} \label{b11}
\mathcal{I} = \tfrac{1}{8}\ \frac{(\Phi^2_1 + \Phi^2_2)^4}{(\Phi^2_1 - \Phi^2_2)^2}\left(\deriv{\Lambda}{z}\right)^2 + \left( \frac{\Phi^2_1 + \Phi^2_2}{\Phi^2_1} \right) J (\Phi_2 / \Phi_1).\end{equation}
Hence, in terms of the new independent variable $z^*$ where
\begin{equation} \label{b12}
\d z^* = \Sigma^{-1}\,\d z, \end{equation}
invariant relation (\ref{b11}) shows that
\begin{equation*}
\Lambda^2_{z^*} = 8 \left( \frac{\Phi_1^2- \Phi_2^2}{\Phi_1^2 + \Phi_2^2} \right)^2 \left[\mathcal{I} - \frac{\Phi_1^2 + \Phi_2^2}{\Phi_1^2 }\,J \left(\frac{\Phi_2}{\Phi_1}\right)\right], \end{equation*}
where (\ref{b9}) gives $\Phi_2/\Phi_1$ in terms of $\Lambda$ according to
\begin{equation*}
\Phi_2/\Phi_1 = ( 1 \pm \sqrt{1 - \Lambda^2} )/ \Lambda.\end{equation*}
Thus,
\begin{equation} \label{b15}
\pm \frac{1}{2\sqrt{2}} \int \sqrt{\frac{\Lambda}{(1 - \Lambda^2)(\Lambda \mathcal{I} - 2\mathcal{L} (\Lambda))}}\ d \Lambda = z^* + \mathcal{C}, \end{equation}
where $\mathcal{L}(\Lambda):=\left(\Phi_2/\Phi_1\right)J\left(\Phi_2/\Phi_1\right)$ and $\mathcal{C}$ is an arbitrary constant of integration. Corresponding to positive solutions $\Sigma$ of \PIV\ \eqref{b8} and $\Lambda$ as determined by (\ref{b15}) for specified $J\left({\Phi_2}/{\Phi_1}\right)$, the original variables $\Phi_1,\ \Phi_2$ in the \Ep\ IV system are given via the relations
\begin{equation} \label{b16}
\Phi^2_1 = \tfrac12\Sigma \left( 1 \mp \sqrt{1 - \Lambda^2} \right),\qquad \Phi^2_2 = \tfrac12\Sigma \left( 1 \pm \sqrt{1 - \Lambda^2} \right).\end{equation}

The class of \Ep\ IV systems will be considered here which corresponds to $\Lambda$ as determined by the quadrature in (\ref{b15}) with
\begin{equation} \label{b17}
\Lambda \mathcal{I} - 2 \mathcal{L} (\Lambda) = \frac{a^2 (1 - \Lambda^2)^m}{\Lambda}.\end{equation}
In this case, (\ref{b15}) yields
\begin{equation} \label{b18}
1 - \Lambda^2 = \Cases{ \left[\pm 2 \sqrt{2}\,a (m-1) (z^*+\mathcal{C})\right]^{2/(1-m)} ,&\mbox{\rm if}\quad m \neq 1, \\
\exp \left\{ \pm 4\sqrt{2}\,a (z^*+\mathcal{C}) \right\}, & \mbox{\rm if}\quad m = 1, } \end{equation}
where $z^*$ is obtained by integration of the relation (\ref{b12}) for a positive solution $\Sigma$ of \PIV. Here, (\ref{b17}) shows that
\begin{equation} \label{b19}
J \left(\frac{\Phi_2}{\Phi_1}\right) = \frac{\mathcal{I}}{1+\left(\Phi_2/\Phi_1\right)^2} - \frac{a^2}{4} \left( \frac{\Phi^2_1 - \Phi^2_2}{\Phi^2_1 + \Phi^2_2} \right)^{2m} \,\frac{\Phi_1^2 + \Phi_2^2}{\Phi_1^2 }.\end{equation}

\section{Action of B\"{a}ck\-lund-Type transformations}

The diversity of B\"{a}ck\-lund and Schlesinger-type transformations admitted by \PIV\ \eqref{b8} has been described in the comprehensive work of \cite{bch95}. Thus, transformations due to Gromak \cite{vg78,vg87} 
Fokas, Mugan and Ablowitz \cite{afumma88}, Lukashevich \cite{nl65,nl67} and Murata \cite{ym85} are all detailed therein along with their conjugation and interconnections. In the present context, the concern is to exploit certain such transformations to generate sequences of exact solutions to the \Ep\ IV system via the action on appropriate seed solutions. In \cite{bch95}, a multiplicity of combinations of the parameters $\a$ and $\b$ has been catalogued which correspond to exact solutions of \PIV\ \eqref{b8} . These may involve error functions, parabolic cylinder functions or rational functions. The generation of solutions of the \Ep\ systems corresponding to the three parameter class of $J(\Phi_1, \Phi_2)$ as given by (\ref{b19}) is illustrated below for elementary rational seeds.

In the above connection, it is seen that \PIV\ \eqref{b8} admits a seed solution
\begin{equation} \label{c1}
\Sigma_0(z) = -2z,\qquad \a = 0, \qquad \b = -2, \qquad \mathcal{I} = -\tfrac12, \end{equation}
where, in the present context the requirement that $\Sigma>0$ restricts attention to regions with $z<0$. With this seed solution, (\ref{b12}) shows that, up to an additive constant,
\begin{equation*}
z^* = - \tfrac12 \ln (-z), \end{equation*}
so that the \Ep\ IV system with $J\left(\Phi_2/\Phi_1\right)$ given by (\ref{b19}) admits the class of solutions with
\begin{align*}
\Phi^2_1 &= \Cases{- \tfrac12z\left( 1 \mp \left[ \pm 2 \sqrt{2}\,a (m-1) \left( - \tfrac12\ln (-z) + \mathcal{C} \right)
 \right]^{1/(1-m)} \right) ,&\mbox{\rm if}\quad m \neq 1, \\
- \tfrac12z\left( 1 \mp \exp \left[ \mp 4 \sqrt{2}\,a \left( - \tfrac12\ln (-z) + \mathcal{C} \right) \right] \right),& \mbox{\rm if}\quad m = 1, }\\
\Phi^2_2 &= \Cases{- \tfrac12z\left( 1 \pm \left[ \pm 2 \sqrt{2}\,a (m-1) \left( - \tfrac12\ln (-z) + \mathcal{C} \right) \right]^{1/(1-m)} \right)
 ,&\mbox{\rm if}\quad m \neq 1, \\
- \tfrac12z\left( 1 \pm \exp \left[ \mp 4\sqrt{2}\,a \left( - \tfrac12\ln(-z)+\mathcal{C} \right) \right] \right) ,&\mbox{\rm if}\quad m = 1, }
\end{align*}
and appropriate choices of sign made to ensure positivity requirements.

It is interesting to observe that for both the seed solutions of \PIV\ \eqref{b8} 
\[\begin{array}{l@{\qquad}l@{\qquad}l}
\Sigma_0(z)=-2z, & \a=0, & \b=-2,\\
\widetilde{\Sigma}_0(z)=-\tfrac23z, & \a=0, & \b=-\tfrac29,
\end{array}\]
the \Ep\ IV system \eqref{b3} reduces to a conventional Ermakov-Ray-Reid system
\begin{equation*}
\deriv[2]{\Phi_{1}}{z} = \frac{1}{\Phi^2_1 \Phi_2} S \left(\frac{\Phi_2}{\Phi_1}\right) ,\qquad \deriv[2]{\Phi_{2}}{z} = \frac{1}{\Phi^2_2 \Phi_1} T \left(\frac{\Phi_1}{\Phi_2}\right).\end{equation*}

\def\q{\omega}
{\begin{theorem}\label{thm:bts}Let $\q_0=\q(z;\a_0,\b_0)$ and $\q_j^{\pm}= \q(z;\a_j^{\pm},\b_j^{\pm})$,
$j=1,2,3,4$ be solutions of \PIV\ \eqref{eq:PIV} with
\begin{align*}
(\a_1^{\pm},\b_1^{\pm}) &= \left(\tfrac14(2-2\a_0 \pm 3\sqrt{-2\b_0}), -\tfrac12(1+\a_0 \pm \tfrac12\sqrt{-2\b_0})^{2}\right),\\[2.5pt]
(\a_2^{\pm},\b_2^{\pm}) &= \left(-\tfrac14(2+2\a_0 \pm 3\sqrt{-2\b_0}),-\tfrac12(1-\a_0 \pm \tfrac12\sqrt{-2\b_0})^{2}\right),\\[2.5pt]
(\a_3^{\pm},\b_3^{\pm}) &= \left(\tfrac32-\tfrac12\a_0\mp\tfrac34\sqrt{-2\b_0},-\tfrac12(1- \a_0\pm\tfrac12\sqrt{-2\b_0})^{2}\right),\\[2.5pt]
(\a_4^{\pm},\b_4^{\pm}) &= \left(\tfrac32-\tfrac12\a_0\mp\tfrac34\sqrt{-2\b_0},-\tfrac12(-1-\a_0 \pm\tfrac12\sqrt{-2\b_0})^{2}\right).\end{align*}
Then
\begin{subequations}
\begin{align}
\mathcal{T}_1^{\pm}:\qquad \q_1^{\pm} &=\frac{\q_0' -\q_0^2 -2z\q_0\mp\sqrt{-2\b_0}}{2\q_0},\\
\mathcal{T}_2^{\pm}:\qquad \q_2^{\pm} &=-\,\frac{\q_0' +\q_0^2+2z\q_0\mp\sqrt{-2\b_0}}{2\q_0},\\
\mathcal{T}_3^{\pm}:\qquad \q_3^{\pm} &=\q_0+\frac{2\left(1-\a_0\mp\tfrac12\sqrt{-2\b_0}\,\right)\q_0}{\q_0' \pm\sqrt{-2\b_0}+2z\q_0 +\q_0^2},\\
\mathcal{T}_4^{\pm}:\qquad \q_4^{\pm} &= \q_0+\frac{2\left(1+\a_0\pm\tfrac12\sqrt{-2\b_0}\,\right)\q_0} {\q_0'\mp\sqrt{-2\b_0}-2z\q_0 -\q_0^2},
\end{align}\end{subequations} valid when the denominators are non-zero,
and where the upper signs or the lower signs are taken throughout each
transformation.\end{theorem}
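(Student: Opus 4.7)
The plan is to establish each of the four B\"{a}cklund transformations by direct substitution. Concentrating first on $\mathcal{T}_1^{\pm}$, I would set
\[
\q_1 = \frac{\q_0' - \q_0^2 - 2z\q_0 \mp \sqrt{-2\b_0}}{2\q_0},
\]
then compute $\q_1'$ and $\q_1''$, using \PIV\ \eqref{eq:PIV} on $\q_0$ at each stage to eliminate $\q_0''$ (and any residual higher derivatives). After inserting $\q_1, \q_1', \q_1''$ into the left and right sides of \PIV\ at parameters $(\a_1^{\pm}, \b_1^{\pm})$ and clearing the common denominator, the resulting identity should reduce to a polynomial in $\q_0, \q_0', z$ that vanishes identically. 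The specific values of $(\a_1^{\pm}, \b_1^{\pm})$ listed in the theorem are in effect dictated by this cancellation: they are the unique parameter choices that annihilate the terms not already removed by the PIV identity.

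Transformation $\mathcal{T}_2^{\pm}$ differs from $\mathcal{T}_1^{\pm}$ only by sign flips in the quadratic and linear $\q_0$ contributions, and is verified in the same manner, producing the stated $(\a_2^{\pm}, \b_2^{\pm})$. For $\mathcal{T}_3^{\pm}$ and $\mathcal{T}_4^{\pm}$ the same direct substitution strategy is available, but I would exploit the fact that these second-type transformations arise structurally as compositions of the first-type ones. Formally composing $\mathcal{T}_1^{\pm}$ with $\mathcal{T}_2^{\pm}$ (with appropriately chosen sign selections) and using \eqref{eq:PIV} to reduce the resulting expression gives something of the form $\q_0 + 2k\q_0/D$, where $D$ is linear in $\q_0', \q_0^2, z\q_0$ and $\sqrt{-2\b_0}$, matching the displayed form of $\q_3^{\pm}$ and $\q_4^{\pm}$. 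Tracking the two successive parameter updates through this composition should reproduce the stated $(\a_j^{\pm}, \b_j^{\pm})$ for $j=3,4$.

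The principal obstacle is the sheer algebraic volume involved: each substitution generates a rational expression in $\q_0, \q_0', z$ with a large number of terms, and $\q_0''$ must be eliminated consistently at every stage before any cancellation becomes visible. The verification is conceptually routine but is awkward to carry out by hand and is most efficiently done in a computer algebra system. The elegant structural alternative, which I would mention in passing rather than develop, is to recognise $\mathcal{T}_1^{\pm}$ and $\mathcal{T}_2^{\pm}$ as Miura-type contact transformations tied to the symmetric $A_2^{(1)}$ Hamiltonian form of \PIV\ due to Noumi and Yamada; in that framework all four transformations appear as fundamental reflections and translations in the affine Weyl group $\widetilde{W}(A_2^{(1)})$ acting on the root parameters, and the explicit shifts $(\a_j^{\pm}, \b_j^{\pm})$ emerge automatically, with the direct computation above serving as a consistency check on the explicit birational form.
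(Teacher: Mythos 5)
Your direct-substitution strategy is correct, but note that the paper itself offers no argument here: its ``proof'' is a pointer to Gromak, Lukashevich and the survey of Bassom, Clarkson and Hicks, and in those references the transformations are established by precisely the verification you outline --- substitute $\omega_j^{\pm}$ into \PIV, eliminate $\omega_0''$ via \eqref{eq:PIV} at every differentiation, and observe that the stated $(\a_j^{\pm},\b_j^{\pm})$ are forced by the cancellation. So your proposal is less a different route than the route the cited sources actually take, written out; what it buys over the paper is a self-contained (if computationally heavy) argument. The one fragile point is your treatment of $\mathcal{T}_3^{\pm}$ and $\mathcal{T}_4^{\pm}$ as composites of $\mathcal{T}_1^{\pm}$ and $\mathcal{T}_2^{\pm}$: after one application the new $\b$-parameter is $-\tfrac12\bigl(1\pm\a_0\pm\tfrac12\sqrt{-2\b_0}\bigr)^2$, so $\sqrt{-2\b^{\pm}_{1,2}}$ in the second application is double-valued, and both the branch chosen and the order of composition (applying $\mathcal{T}_1$ after $\mathcal{T}_2$ versus the reverse) change which of $\mathcal{T}_3^{\pm}$, $\mathcal{T}_4^{\pm}$ one lands on --- careless bookkeeping reproduces $\a_3^{\pm}$ with the wrong sign on the constant term. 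Moreover, agreement of the parameter shifts does not by itself guarantee that the composite birational map coincides with the displayed formula for $\omega_3^{\pm}$, since \PIV\ admits distinct transformations with the same parameter action; that identification still requires the elimination of $\omega_0''$ and an explicit comparison. The safest course, consistent with your own fallback, is to verify all four transformations by direct substitution and keep the $\widetilde{W}(A_2^{(1)})$ affine Weyl group picture as the structural explanation rather than the proof.
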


\begin{proof}See Gromak \cite{vg78,vg87} and Lukashevich \cite{nl65,nl67}; 
 also \cite{bch95,afma82,refGLS02,vgnl82,ym85}.
\end{proof}}

A class of \bts\ for the \peqs\ is generated by so-called \textit{Schlesinger transformations} of the associated isomonodromy problems.
Fokas, Mugan and Ablowitz \cite{afumma88} (see also \cite{mu92}), deduced the following Schlesinger transformations
$\mathcal{R}_1$--$\mathcal{R}_4$ for \PIV
\begin{subequations}
\begin{align}  \label{STR1}
\mathcal{R}_1:\quad \q_1(z;\a_1,\b_1)
&= \frac{\left(\q_0'+\sqrt{-2\b_0}\right)^2+\left(4\a+4-2\sqrt{-2\b_0}\right)\q_0^2-\q_0^2(\q_0+2z)^2}{2\q_0\left(\q_0^2+2z\q_0-\q_0'-\sqrt{-2\b_0}\right)},\\ \label{STR2}
\mathcal{R}_2:\quad \q_2(z;\a_2,\b_2)
&= \frac{\left(\q_0'-\sqrt{-2\b_0}\right)^2+\left(4\a-4-2\sqrt{-2\b_0}\right)\q_0^2-\q_0^2(\q_0+2z)^2}{2\q_0\left(\q_0^2+2z\q_0+\q_0'-\sqrt{-2\b_0}\right)},\\ \label{STR3}
\mathcal{R}_3:\quad \q_3(z;\a_3,\b_3) 
&= \frac{\left(\q_0'-\sqrt{-2\b_0}\right)^2+\left(4\a+4+2\sqrt{-2\b_0}\right)\q_0^2-\q_0^2(\q_0+2z)^2}{2\q_0\left(\q_0^2+2z\q_0-\q_0'+\sqrt{-2\b_0}\right)},\\ \label{STR4}
\mathcal{R}_4:\quad \q_4(z;\a_4,\b_4)
&= \frac{\left(\q_0'+\sqrt{-2\b_0}\right)^2+\left(4\a-4+2\sqrt{-2\b_0}\right)\q_0^2-\q_0^2(\q_0+2z)^2}{2\q_0\left(\q_0^2+2z\q_0+\q_0'+\sqrt{-2\b_0}\right)},
\end{align}\end{subequations} 
where $\q_0\equiv \q(z;\a_0,\b_0)$ and 
\begin{subequations}\begin{align}
(\a_1,\b_1)&=\left(\a_0+1,-\tfrac12 \big(2-\sqrt{-2\b_0}\big)^2\right),\\ 
(\a_2,\b_2)&=\left(\a_0-1,-\tfrac12 \big(2+\sqrt{-2\b_0}\big)^2\right),\\ 
(\a_3,\b_3)&=\left(\a_0+1,-\tfrac12 \big(2+\sqrt{-2\b_0}\big)^2\right),\\
(\a_4,\b_4)&=\left(\a_0-1,-\tfrac12 \big(2-\sqrt{-2\b_0}\big)^2\right).
\end{align}\end{subequations}
Fokas, Mugan and Ablowitz \cite{afumma88} also defined
the composite transformations $\mathcal{R}_5=\mathcal{R}_1\mathcal{R}_3$
and
$\mathcal{R}_7=\mathcal{R}_2\mathcal{R}_4$ given by
\begin{subequations}
\begin{align}\label{STR5} 
\mathcal{R}_5:\quad \q_5(z;\a_5,\b_5)
&=\frac{\left(\q_0'-\q_0^2-2z\q_0\right)^2+2\b_0}{2\q_0\left[ \q_0'-\q_0^2-2z\q_0+2\left(\a_0+1 \right)\right]},\\
\mathcal{R}_7:\quad \q_7(z;\a_7,\b_7)
&=-\,\frac{\left(\q_0'+\q_0^2+2z\q_0\right)^2+2\b_0}{2\q_0\left[\q_0'+\q_0^2+2z\q_0-2\left(\a_0-1 \right)\right]},\label{STR7} 
\end{align}
respectively, where 
\begin{equation}
(\a_5,\b_5)=(\a_0+2,\b_0),\qquad 
(\a_7,\b_7)=(\a_0-2,\b_0).
\end{equation}
\end{subequations}
We remark that $\mathcal{R}_5$ and $\mathcal{R}_7$ are the transformations
$\mathcal{T}_+$ and $\mathcal{T}_-$, respectively, given by Murata \cite{ym85}.

The application of $\mathcal{T}_1^+$ and $\mathcal{T}_2^+$ to the seed solution $\Sigma_0(z)$ given by (\ref{c1}) generates, in turn, new positive solutions
\begin{equation} \label{c5}
\begin{array}{l@{\qquad}l@{\qquad}l@{\qquad}l} \Sigma_{2}(z) = {1}/{z} , &\a_{1} = 2 ,& \b_{1} = -2 ,& z > 0, \\
\widetilde{\Sigma}_{2}(z) = -{1}/{z} ,& \a_{2} = -2 ,& \b_{2} = -2 ,& z < 0.\end{array} \end{equation}
Insertion of $\Sigma_{1}$ and $\Sigma_{2}$ in (\ref{b16}) where $1-\Lambda^2$ is determined by (\ref{b18}) with $z^*=\pm \tfrac12z^2$ delivers associated solutions of the \Ep\ IV system \eqref{b3} with $J\left({\Phi_2}/{\Phi_1}\right)$ given by (\ref{b19}).

The repeated action of the B\"{a}ck\-lund transformations $\left\{\mathcal{T}_j^\pm\right\}_{j=1}^4$
and other \bts\ such as the quartet $\left\{\mathcal{R}_j\right\}_{j=1}^4$
as constructed in \cite{afumma88} by the isomonodromy deformation method may be applied systematically to generate a multiplicity of exact solutions (\ref{b16}) of the \Ep\ IV systems \eqref{b3} which are here subject to the requirement $\Sigma>0$.
Thus, in particular, the repeated application of the B\"{a}ck\-lund transformation $\mathcal{R}_3$ \eqref{STR3} to $\Sigma_{2}$ as defined by (\ref{c5}) 
generates the following {additional} solutions of \PIV\ 
\begin{equation*}
\begin{array}{l@{\qquad}l@{\qquad}l}
\displaystyle\Sigma_3(z) = \frac{4z}{2z^2+1} ,& \a_3 = 3 ,& \b_3 = -8, \\[10pt]
\displaystyle\Sigma_4(z) = \frac{3(2z^2+1)}{z(2z^2+3)} ,& \a_4 = 4 ,& \b_4 = -18, \\[10pt]
\displaystyle\Sigma_5(z) = \frac{8z(2z^2+3)}{4z^4+12z^2+3} ,& \a_5 = 5 ,& \b_5 = -32,\\[10pt]
\displaystyle\Sigma_6(z) = \frac{5(4z^4+12z^2+3)}{z(4z^4+20z^2+15)} ,& \a_6 = 6 ,& \b_6 = -50,
\end{array}\end{equation*}
\hide{and application of the B\"{a}ck\-lund transformation $\mathcal{R}_2$ \eqref{STR2} to $\Sigma_{2}^-$ as defined by (\ref{c5}) 
generates a new solution of \PIV\ with
\begin{equation*}
\begin{array}{l@{\qquad}l@{\qquad}l}
\displaystyle\Sigma_3^-(z) = -\frac{4z}{2z^2-1} ,& \a_3 = -3 ,& \b_3 = -8, \\[10pt]
\displaystyle\Sigma_4^-(z) = -\frac{3(2z^2-1)}{z(2z^2-3)} ,& \a_4 = -4 ,& \b_4 = -18, \\[10pt]
\displaystyle\Sigma_5^-(z) = -\frac{8z(2z^2-3)}{4z^4-12z^2+3} ,& \a_5 = -5 ,& \b_5 = -32,\\[10pt]
\displaystyle\Sigma_6^-(z) = -\frac{5(4z^4-12z^2+3)}{z(4z^4-20z^2+15)} ,& \a_6 = -6 ,& \b_6 = -50,
\end{array}\end{equation*}}%
{These results constitute part of the catalogue of rational solutions of PIV set down in Table 3 of \cite{bch95} and some related forms have been noted in the context of supersymmetric quantum mechanics, see \cite{refMar09}.} Thus, the class of \Ep\ IV systems with $J\left(\Phi_2/\Phi_1\right)$ given by (\ref{b19}) admits exact solutions $\Phi_1$, $\Phi_2$ determined by the relations (\ref{b16}), 
where 
\begin{align*}
z_3^* &= \tfrac14\left\{{z^2} +  \ln z \right\},\\[2pt]
z_4^* &= \tfrac16\left\{{z^2} +  \ln (2z^2+1) \right\},\\[2pt]
z_5^* &= \tfrac18\left\{{z^2} +\ln [z(2z^2+3)]\right\},\\[2pt]
z_6^* &= \tfrac1{10}\left\{{z^2} +  \ln (4z^4+12z^2+3) \right\},
\end{align*}
for $z > 0$, so that from \eqref{b18}
\begin{equation*}
1 - \Lambda^2 = \Cases{\left[ \pm 2\sqrt{2}\, a (m - 1) \left( \tfrac12{z^2}+ \tfrac14 \ln z + \mathcal{C} \right) \right]^{2/(1-m)}, 
&\mbox{\rm if}\quad m \neq 1, \\
\exp \left\{ \pm 4\sqrt{2}\, a \left( \tfrac12{z^2} + \tfrac14 \ln z + \mathcal{C} \right) \right\}, &\mbox{\rm if}\quad m = 1.} 
\end{equation*}

\begin{lemma}For the rational solutions $\Sigma_n(z)$
\[ \int^z \frac{\d s}{\Sigma_{n}(s)}= \frac{1}{2(n-1)}\left\{z^2+\ln \deriv{h_{n-1}}{z}\right\},\]
where $h_{n-1}(z)=(-\i)^nH_{n-1}(-\i z)$, with $H_m(x)$ the Hermite polynomial.
\end{lemma}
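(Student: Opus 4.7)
The plan is to recognise $\Sigma_{n}$ as the logarithmic derivative of $h_{n-1}$ and then to exploit the second-order linear ODE which $h_{n-1}$ inherits from the Hermite equation. Once this identification is in hand, the quadrature reduces to a direct manipulation.

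The key step is to show that
\[\Sigma_{n}(z)=\frac{1}{h_{n-1}(z)}\deriv{h_{n-1}}{z}.\]
For the tabulated rational solutions $\Sigma_{2},\Sigma_{3},\ldots,\Sigma_{6}$ this is a matter of direct verification: evaluating $h_{n-1}(z)=(-\i)^{n}H_{n-1}(-\i z)$ on the standard Hermite polynomials produces, up to a nonzero multiplicative constant, the denominator polynomial of $\Sigma_{n}$ in lowest terms, with the numerator coinciding with $h_{n-1}'(z)$. To cover the whole hierarchy, one proceeds by induction on $n$, taking as base case $\Sigma_{2}=h_{1}'/h_{1}$ and using that the family is generated by iterated application of the \bt\ $\mathcal{R}_{3}$ \eqref{STR3}; the Hermite recurrences $H_{m+1}(x)=2xH_{m}(x)-2mH_{m-1}(x)$ and $H_{m}'(x)=2mH_{m-1}(x)$ then permit the image of $h_{n-1}'/h_{n-1}$ under $\mathcal{R}_{3}$ to be reorganised as $h_{n}'/h_{n}$, with parameters moving correctly from $(n,-2(n-1)^{2})$ to $(n+1,-2n^{2})$.

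Given the representation, the Hermite equation $H_{m}''(x)-2xH_{m}'(x)+2mH_{m}(x)=0$ with $x=-\i z$, after multiplication by the overall factor $(-\i)^{n}$, translates into the linear ODE
\[\deriv[2]{h_{n-1}}{z}+2z\deriv{h_{n-1}}{z}-2(n-1)h_{n-1}=0.\]
Dividing by $h_{n-1}'$ and rearranging then gives
\[\frac{h_{n-1}}{h_{n-1}'}=\frac{1}{2(n-1)}\left(2z+\frac{h_{n-1}''}{h_{n-1}'}\right)=\frac{1}{2(n-1)}\deriv{}{z}\left\{z^{2}+\ln\deriv{h_{n-1}}{z}\right\},\]
so since $1/\Sigma_{n}=h_{n-1}/h_{n-1}'$ by the first step, integrating in $z$ yields the claimed identity modulo an additive constant which may be absorbed into the logarithm.

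The principal obstacle lies in establishing the representation $\Sigma_{n}=(\ln h_{n-1})'$ uniformly in $n$: for the tabulated cases it is immediate by inspection, but the general inductive step requires pushing the explicit formula for $\mathcal{R}_{3}$ in \eqref{STR3} through the Hermite recurrences, which is an algebraic rather than a conceptual exercise. The subsequent derivation of the ODE for $h_{n-1}$ and the final integration are then essentially automatic.
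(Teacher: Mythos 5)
Your argument is correct and follows essentially the same route as the paper: identify $\Sigma_{n}=\deriv{}{z}\ln h_{n-1}$, use the linear ODE $h_{n-1}''+2zh_{n-1}'-2(n-1)h_{n-1}=0$ inherited from the Hermite equation to rewrite $1/\Sigma_{n}=h_{n-1}/h_{n-1}'$ as an exact derivative, and integrate. The only difference is that the paper simply cites the representation $\Sigma_{n}=(\ln h_{n-1})'$ from the literature on rational solutions of \PIV, whereas you sketch an inductive verification of it via $\mathcal{R}_{3}$ and the Hermite recurrences; that extra work is sound but not needed for the paper's purposes.
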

\begin{proof}The rational solutions $\Sigma_n(z)$ have the form \cite{bch95}
\beq \Sigma_{n}(z)=\deriv{}{z}\ln h_{n-1}(z),\label{SigmaRat}\eeq
where $h_{n-1}(z)=(-\i)^nH_{n-1}(-\i z)$, with $H_m(x)$ the Hermite polynomial.
Since $h_{n-1}(z)$ satisfies the equation
\[\deriv[2]{h_{n-1}}{z}+2z\deriv{h_{n-1}}{z}-2(n-1) h_{n-1}=0,\]
which follows from the differential equation satisfied by Hermite polynomials (cf.~\cite[\S18.9]{refDLMF}),
then
\[ \frac{1}{\Sigma_{n}(z)}=\frac{h_{n-1}(z)}{h_{n-1}'(z)}=\frac{z}{n-1}+\frac{1}{2(n-1)}\frac{h_{n-1}''(z)}{h_{n-1}'(z)},\]
and so the result follows.\end{proof}

\begin{remark}{\rm
The rational solutions $\Sigma_{n}(z)$ given by \eqref{SigmaRat} satisfy the Riccati equation
\[\deriv{\Sigma_{n}}{z}+\Sigma_{n}^2+2z\Sigma_{n}-2(n-1)=0.\]
}\end{remark}

\section{Bound State Solutions}
The preceding describes how seed solutions $\Sigma_0(z)$ of \PIV\ \eqref{b8} can be used to generate associated classes of exact solutions of associated hybrid \Ep\ IV systems. The procedure is constrained, in general, by the requirement that attention be restricted to regions on which $\Sigma(z)$ is positive. {In this connection, it is remarked that the application of exact solutions of Painlev\'e equations to physical problems commonly requires consideration of restricted regions such as those associated with a positivity requirement. Thus, in \cite{lbjncrws10} in the application of a Painlev\'e II model to boundary value problems in two-ion electro-diffusion, the ion concentrations, which are necessarily positive, are governed by the classical Painlev\'e XXXIV equation which is directly related to the single component Ermakov-Painlev\'e II equation \cite{cr14}. In \cite{lbjncrws10}, regions were isolated wherein such positivity holds on half-space regions and boundary value problems were solved thereon. In related work similarity solutions of Painlev\'e II in terms of Yablonskii-Vorob'ev polynomials or classical Airy functions as generated by the iterated action of a B\"acklund transformation have been used recently to solve a range of moving problems for solitonic equations on regions $0<x<S(t)$ \cite{cr15,cr16,cr17}.} 

Bassom \etal\ \cite{bchm92} showed that if $\a$ is an odd integer and $\b=0$ then exact solutions of \PIV\ \eqref{b8} exist with the property that $\Sigma(z)\rightarrow0$ exponentially as $z\rightarrow\pm\infty$. In particular, it may be verified that
\begin{equation}
\label{bstate}
\Sigma_0(z)=\frac{\xi\exp(-z^2)}{\sqrt{\pi}\left[1-\tfrac12\xi{\erfc}(z)\right]},
\end{equation}
%
is a solution of \PIV\ \eqref{b8} corresponding to parameters $\a=1$ and $\b=0$. Here $\xi$ is a constant and $\erfc(z)$ denotes the usual complementary error function: it is evident that for $0<\xi<1$, the solution (\ref{bstate}) of \PIV\ \eqref{b8} is positive for all real $z$. It is now a routine matter to generate the associated class of exact solutions of the hybrid \Ep\ IV systems. More generally, if we denote the bound state solution of \PIV\ \eqref{b8} with $\a=2n+1$ and $\b=0$ as $\Sigma_n(z)$, then it can be proved that the \bt\
\begin{equation}\label{36}
\Sigma_{n+1}(z) = \frac{\big[\Sigma'_n(z) - \Sigma^2_n(z) - 2z \Sigma_n(z)\big]^2} {2\Sigma_n(z) \big[\Sigma'_n(z) - \Sigma^2_n(z) - 2z \Sigma_n(z) + 4(n + 1)\big]},\end{equation}
which is a special case of $\mathcal{R}_5$ \eqref{STR5},
can be applied iteratively to generate an infinite sequence of bound state solutions of \PIV\ \eqref{b8} commencing with the seed solution $\Sigma_0(z)$. If we denote $\psi(z)=1-\tfrac12\xi{\erfc}(z)$ so that  
\beq\label{37a}\Sigma_0(z)=\frac{\psi'(z)}{\psi(z)}:=\Psi(z),\eeq
then it may be shown that the next two of members of the bound state hierarchy of \PIV\ \eqref{b8} are given by
\begin{subequations}\label{37bc}\begin{align} \label{37b}
\Sigma_1(z) &= -\,\frac{\Psi(z) [\Psi(z) + 2z]^2} {\Psi^2(z) + 2z\Psi(z) - 2},\\
\label{37c}\Sigma_2(z) &= \frac{4\Psi (z)[\Psi^2(z) + 3z\Psi(z) + 2z^2 - 1]^2}{\big[\Psi^2(z) + 2z\Psi(z) - 2\big]\big[z\Psi^3(z) + (4z^2+3)\Psi^2(z) + 2z(2z^2+3)\Psi(z) - 4\big]},
\end{align}\end{subequations} with $\Psi(z)$ given by \eqref{37a}.
\hide{\begin{subequations}\label{37bc}\begin{align} \label{37b}
\Sigma_1(z) &= -\,\frac{\Psi(z) [\Psi(z) + 2z]^2} {\Psi^2(z) + 2z\Psi(z) - 2},\\
\label{37c}\Sigma_2(z) &= \frac{4\Psi(z) [\Psi^2(z) + 3z\Psi (z)+ 2z^2 - 1]^2}{\big[\Psi^2(z) + 2z\Psi(z) - 2\big]\big[z\Psi^3(z) + (4z^2+3)\Psi^2(z) + 2z(2z^2+3)\Psi(z) - 4\big]}.
\end{align}\end{subequations}}%
Each bound state solution $\Sigma_n(z)$ decays exponentially as $|z|\rightarrow\infty$ and has $n$ distinct zeros and satisfies
\beq\label{eq:wn}
\deriv[2]{\Sigma_n}{z} = \frac{1}{2\Sigma_n}\left(\deriv{\Sigma_n}{z}\right)^{2} + \frac{3}{2}\Sigma_n^3 + 4z \Sigma_n^2 + 2(z^2 - 2n-1)\Sigma_n, \eeq
which is \PIV\ \eqref{b8} with parameters $\a=2n+1$ and $\b=0$. 
These bound state solutions $\Sigma_n(z)$ have regions separated by the zeros on which they are positive. Thus, it is not only the fundamental bound state solution $\Sigma_0(z)$ of \PIV\ \eqref{b8} that is non-negative on the whole $z$-axis, but this property is shared by the higher members of the hierarchy. {These bound-state solutions are distinguished in that they decay exponentially and while many solutions of Painlev\'e IV are bounded (see e.g. \cite{refBerm12}), those such as the rational solutions do not decay as rapidly.}
Plots of $\Sigma_n(z;\xi)$, $n=1,2,\ldots,6$ for various $\xi$ are given in Figure \ref{fig:bstate1}.

\begin{lemma}\label{lemma1} For bound state solutions $\Sigma_{n}(z)$ of \PIV\ 
\begin{equation*}
\frac{1}{\Sigma_{n+1}(z)}=\frac{1}{\Sigma_{n}(z)}+\deriv{}{z}\left[ \frac{2}{\Sigma'_n(z) - \Sigma^2_n(z) - 2z \Sigma_n(z)}\right].
\end{equation*}
\end{lemma}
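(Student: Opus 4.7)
The plan is to treat the asserted identity as a direct algebraic consequence of the B\"acklund relation \eqref{36} together with the \PIV\ equation \eqref{eq:wn} satisfied by $\Sigma_n(z)$. Introduce the shorthand
\[ F := \Sigma_n' - \Sigma_n^2 - 2z\,\Sigma_n,\]
so that the B\"acklund formula \eqref{36} reads
\[ \Sigma_{n+1} = \frac{F^{2}}{2\Sigma_n\bigl(F+4(n+1)\bigr)}. \]
Inverting gives the clean decomposition
\[ \frac{1}{\Sigma_{n+1}} = \frac{2\Sigma_n}{F} + \frac{8(n+1)\Sigma_n}{F^{2}}. \]
Since $\tfrac{\d}{\d z}\bigl[2/F\bigr] = -2F'/F^{2}$, the lemma is equivalent to the single identity
\[ \frac{2\Sigma_n}{F} + \frac{8(n+1)\Sigma_n}{F^{2}} = \frac{1}{\Sigma_n} - \frac{2F'}{F^{2}}. \]

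Clearing denominators by multiplying through by $F^{2}\Sigma_n$ recasts this as the polynomial identity
\[ F^{2} - 2\Sigma_n^{2}F - 8(n+1)\Sigma_n^{2} - 2\Sigma_n F' = 0. \qquad (\ast)\]
The next step is to expand each piece. The squares $F^{2}$ and the product $\Sigma_n^{2}F$ are straightforward polynomials in $\Sigma_n$, $\Sigma_n'$ and $z$. For the remaining term, differentiate $F$ to obtain
\[ F' = \Sigma_n'' - 2\Sigma_n\Sigma_n' - 2\Sigma_n - 2z\Sigma_n',\]
and then eliminate $\Sigma_n''$ using \eqref{eq:wn}, which (with $\b=0$ and $\a=2n+1$) gives
\[ 2\Sigma_n\Sigma_n'' = (\Sigma_n')^{2} + 3\Sigma_n^{4} + 8z\Sigma_n^{3} + 4(z^{2}-2n-1)\Sigma_n^{2}. \]
Substituting this into $2\Sigma_n F'$ turns $(\ast)$ into an identity solely in $\Sigma_n,\Sigma_n',z$, and $n$.

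Grouping terms by monomial type shows that everything cancels: the $(\Sigma_n')^{2}$, $\Sigma_n^{4}$, $z\Sigma_n^{3}$, $\Sigma_n^{2}\Sigma_n'$, $z\Sigma_n\Sigma_n'$ and $z^{2}\Sigma_n^{2}$ contributions sum to zero, while the pure $\Sigma_n^{2}$ coefficient collapses to $4(2n+1)+4-8(n+1)=0$ --- precisely here the PIV parameter value $\a=2n+1$ is used. Hence $(\ast)$ holds and the lemma follows. The only real obstacle is bookkeeping, and the main insight is that choosing $F$ as the distinguished Riccati-type expression appearing in $\mathcal{R}_5$ turns the calculation into an almost mechanical verification; all the $n$-dependence concentrates in the single $\Sigma_n^{2}$ coefficient, which vanishes by virtue of the bound-state parameter choice.
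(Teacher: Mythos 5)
Your proposal is correct and follows essentially the same route as the paper: both invert the B\"acklund relation \eqref{36} to write $1/\Sigma_{n+1}$ in terms of $F=\Sigma_n'-\Sigma_n^2-2z\Sigma_n$, differentiate $2/F$, and use \eqref{eq:wn} to eliminate $\Sigma_n''$; your cleared-denominator identity $(\ast)$ is precisely the partial-fraction decomposition the paper writes down directly. The only difference is presentational (polynomial verification versus partial fractions), and your coefficient check, including the cancellation $4(2n+1)+4-8(n+1)=0$, is accurate.
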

\begin{proof}By definition
\begin{align*}
\deriv{}{z}\left[ \frac{2}{\Sigma'_n(z) - \Sigma^2_n(z) - 2z \Sigma_n(z)}\right]
&=-\frac{2\left\{\Sigma''_n(z) - 2[\Sigma_n(z)+z]\Sigma_n'(z) -2\Sigma_n\right\}}{\left[\Sigma'_n(z) - \Sigma^2_n(z) - 2z \Sigma_n(z)\right]^2}\\
&= \frac{8(n+1)\Sigma_n}{\left[\Sigma'_n(z) - \Sigma^2_n(z) - 2z \Sigma_n(z)\right]^2}\\ &\qquad
+\frac{2\Sigma_n}{\Sigma'_n(z) - \Sigma^2_n(z) - 2z \Sigma_n(z)}-\frac{1}{\Sigma_n(z)},
\end{align*}
since $\Sigma_n(z)$ satisfies \eqref{eq:wn}.
Further from the \bt\ \eqref{36} we have
\begin{align*}
\frac{1}{\Sigma_{n+1}(z)} &= \frac{2\Sigma_n(z) \big[\Sigma'_n(z) - \Sigma^2_n(z) - 2z \Sigma_n(z) + 4(n + 1)\big]}{\big[\Sigma'_n(z) - \Sigma^2_n(z) - 2z \Sigma_n(z)\big]^2}\\
&= \frac{8(n+1)\Sigma_n(z)}{\left[\Sigma'_n(z) - \Sigma^2_n(z) - 2z \Sigma_n(z)\right]^2}
+\frac{2\Sigma_n(z)}{\Sigma'_n(z) - \Sigma^2_n(z) - 2z \Sigma_n(z)},
\end{align*}
and so the result follows.
\end{proof}

\begin{corollary} For bound state solutions $\Sigma_{n}(z)$ of \PIV\ 
\begin{equation}\label{intSn}
\int^z\frac{\d s}{\Sigma_{n+1}(s)}=\int^z\frac{\d s}{\Sigma_{n}(s)} +\frac{2}{\Sigma'_n(z) - \Sigma^2_n(z) - 2z \Sigma_n(z)}.
\end{equation}
\end{corollary}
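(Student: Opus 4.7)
The corollary is essentially an immediate integrated form of Lemma \ref{lemma1}, so the plan is very short. Starting from the pointwise identity
\[\frac{1}{\Sigma_{n+1}(z)}=\frac{1}{\Sigma_{n}(z)}+\deriv{}{z}\left[ \frac{2}{\Sigma'_n(z) - \Sigma^2_n(z) - 2z \Sigma_n(z)}\right],\]
which has just been established in the lemma, I would integrate both sides in $z$ (treating the expressions as indefinite integrals, modulo an additive constant of integration that can be absorbed into the antiderivatives on either side).

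The right-hand side is a sum of two terms: the first integrates to $\int^z\d s/\Sigma_n(s)$ by definition, while the second is a total derivative whose antiderivative is simply $2/[\Sigma'_n(z) - \Sigma^2_n(z) - 2z \Sigma_n(z)]$ by the fundamental theorem of calculus. The left-hand side gives $\int^z\d s/\Sigma_{n+1}(s)$, which yields exactly \eqref{intSn}.

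There is really no obstacle here: once Lemma \ref{lemma1} is in place, the corollary is a one-line consequence. The only mild point worth noting is that the identity is valid on any interval where $\Sigma_n$ does not vanish and where the denominator $\Sigma'_n - \Sigma_n^2 - 2z\Sigma_n$ is non-zero, so that both sides of the differentiated identity make sense; on such an interval the integrated identity holds up to the usual additive constant, which is conventionally suppressed in the indefinite-integral notation employed in \eqref{intSn}. Accordingly, the proof consists simply in quoting Lemma \ref{lemma1} and integrating.
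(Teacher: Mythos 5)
Your proposal is correct and coincides with the paper's own proof, which simply notes that the corollary follows immediately from Lemma \ref{lemma1} by integration. Your extra remark about validity on intervals where the denominators are non-vanishing is a reasonable clarification but not something the paper spells out.
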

\begin{proof}The result immediately follows from Lemma \ref{lemma1}. 
\end{proof}

\begin{lemma}For the bound state solution $\Sigma_{0}(z)$
 \[ \int^z\frac{\d s}{\Sigma_{0}(s)} = \frac{2-\xi}{2\xi}\sqrt{\pi} \int^z\exp(s^2)\,\d s
 +\sum_{n=0}^{\infty}\frac{n!\,2^{2n}z^{2n+2}}{(2n+2)!}.\]
\end{lemma}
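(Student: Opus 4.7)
The plan is to evaluate $1/\Sigma_0$ explicitly from \eqref{bstate}. Since $\Sigma_0(z)=\xi e^{-z^2}/\bigl\{\sqrt{\pi}[1-\tfrac12\xi\,\erfc(z)]\bigr\}$, a direct inversion gives
\[
\frac{1}{\Sigma_0(z)} = \frac{\sqrt{\pi}}{\xi}\,e^{z^2}-\frac{\sqrt{\pi}}{2}\,e^{z^2}\erfc(z).
\]
Writing $\erfc(s)=1-\operatorname{erf}(s)$ and regrouping the $e^{s^2}$ pieces immediately produces the clean split
\[
\int^z\frac{\d s}{\Sigma_0(s)} = \frac{2-\xi}{2\xi}\sqrt{\pi}\int^z e^{s^2}\,\d s+\frac{\sqrt{\pi}}{2}\int^z e^{s^2}\operatorname{erf}(s)\,\d s.
\]
This already matches the first term in the claimed identity, so the lemma reduces to showing that $\tfrac12\sqrt{\pi}\int^z e^{s^2}\operatorname{erf}(s)\,\d s$ equals the power series $\sum_{n\ge 0} n!\,2^{2n}z^{2n+2}/(2n+2)!$.

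For that identification I would use the integral representation $\tfrac12\sqrt{\pi}\,\operatorname{erf}(s)=\int_0^s e^{-t^2}\,\d t$ and set $H(z):=\int_0^z e^{s^2}\int_0^s e^{-t^2}\,\d t\,\d s$, whose derivative is $H'(z)=e^{z^2}\int_0^z e^{-t^2}\,\d t$. A short differentiation shows this $H'$ satisfies the linear first-order ODE $y'-2zy=1$ with $y(0)=0$. Substituting a power series $y=\sum a_n z^n$ yields the two-term recurrence $(n+1)a_{n+1}=2a_{n-1}$ with $a_0=0$, forcing $a_{2k}=0$ and $a_{2k+1}=2^k/(2k+1)!!$. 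Integrating term by term and applying the elementary identity $(2k+2)(2k+1)!!\cdot 2^k k!=(2k+2)!$ converts the double-factorial denominators into the compact form $k!\,2^{2k}/(2k+2)!$, reproducing the series in the lemma exactly.

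The main (and really only) obstacle is careful factorial bookkeeping, combined with correctly fixing the implicit additive constants of integration inherited from the $\int^z$ notation used throughout the paper; once both terms are anchored at $z=0$, where $H$ and the power series both vanish, the identity follows. The whole argument is essentially a one-variable computation and does not require any further Painlev\'e structure beyond the explicit closed form \eqref{bstate} of $\Sigma_0$.
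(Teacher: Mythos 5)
Your proposal is correct and follows essentially the same route as the paper: invert the explicit form \eqref{bstate} to get $1/\Sigma_0(z)=\frac{\sqrt{\pi}}{\xi}\e^{z^2}-\tfrac12\sqrt{\pi}\,\e^{z^2}\erfc(z)$, peel off the $\frac{2-\xi}{2\xi}\sqrt{\pi}\,\e^{z^2}$ piece, and identify the remainder with the stated power series. The only difference is that the paper simply quotes the Maclaurin expansion $\erfc(z)=1-\frac{2}{\sqrt{\pi}}\e^{-z^2}\sum_{n\ge0}\frac{n!\,2^{2n}z^{2n+1}}{(2n+1)!}$ from DLMF (7.6.2) and integrates term by term, whereas you re-derive that series by solving $y'-2zy=1$, $y(0)=0$ for $y=\e^{z^2}\int_0^z\e^{-t^2}\,\d t$; both are sound, and your factorial bookkeeping checks out.
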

\begin{proof}From \eqref{bstate} we have
\begin{align*}
\frac{1}{\Sigma_0(z)}&=\frac{\sqrt{\pi}\exp(z^2)}{\xi}-\tfrac12\sqrt{\pi} \exp(z^2)\, {\erfc}(z)\\
&=\frac{2-\xi}{2\xi}\sqrt{\pi} \exp(z^2)+\sum_{n=0}^{\infty}\frac{n!\,2^{2n}z^{2n+1}}{(2n+1)!},
\end{align*}
since
 \[\erfc(z)=1-\frac{2}{\sqrt{\pi}}\exp(-z^2)\sum_{n=0}^{\infty}\frac{n!\,2^{2n}z^{2n+1}}{(2n+1)!},\]
 see \cite[equation (7.6.2)]{refDLMF}. Hence the result follows. \end{proof}
  
 \begin{remark}{\rm We remark that
\[\erfc(z)=\frac{2}{\pi} \exp(-z^2)\int_{0}^{\infty}\frac{\exp\left(-z^2t^2\right)}{t^{2}+1}\,\d t,\]
see \cite[equation (7.7.1)]{refDLMF}, and
\[\erfc(z)=\pi^{-1/2} \exp(-z^2) U(\tfrac12,\tfrac12,z^2),\]
with $U(a,b,z)$ the Kummer function, see \cite[equation (7.11.5)]{refDLMF}. Also $S_0(z)=1/\Sigma_0(z)$ satisfies the equation
\[ \deriv[2]{S_0}{z}-2z\deriv{S_0}{z}-2S_0=0.\]}\end{remark}

From \eqref{37bc} and \eqref{intSn} we obtain
\begin{align*}
\int^z\frac{\d s}{\Sigma_{1}(s)}&=\int^z\frac{\d s}{\Sigma_{0}(s)} -\frac{1}{\Psi(z)[\Psi(z)+2z]},\\
\int^z\frac{\d s}{\Sigma_{2}(s)}&=\int^z\frac{\d s}{\Sigma_{0}(s)} -\frac {\Psi^{3}(z)+2z\Psi^{2}(z)+4z}{4\Psi(z)[\Psi ^{2}(z)+3z\Psi (z)+2z^{2}-1]},
\end{align*}
where $\Psi(z)$ is defined by \eqref{37a}.

{\def\plot#1{\includegraphics[width=2.25in,height=2.25in]{#1}}
\begin{figure}[ht]
\[ \begin{array}{@{\qquad}c@{\qquad}c}
\plot{PIVbssol1} & \plot{PIVbssol2}\\
\Sigma_1(z;\xi) & \Sigma_2(z;\xi)\\[7.5pt]
\plot{PIVbssol3} & \plot{PIVbssol4}\\
\Sigma_3(z;\xi) & \Sigma_4(z;\xi)\\[7.5pt]
\plot{PIVbssol5} & \plot{PIVbssol6}\\
\Sigma_5(z;\xi) & \Sigma_6(z;\xi)
\end{array}\]
\caption{\label{fig:bstate1}Plots of $\Sigma_n(z;\xi)$, $n=1,2,\ldots,6$ for $\xi=0.3$ (black), $0.5$ (red), $0.7$ (blue), $0.8$ (green), $0.9$ (purple).}
\end{figure}
\begin{figure}[ht]
\[ \begin{array}{c@{\qquad}c}
\plot{PIVbssol1} & \plot{SIVbssol2}\\
\sigma_1(z;\xi) & \sigma_2(z;\xi)\\[7.5pt]
\plot{SIVbssol3} & \plot{SIVbssol4}\\
\sigma_3(z;\xi) & \sigma_4(z;\xi)\\[7.5pt]
\plot{SIVbssol5} & \plot{SIVbssol6}\\
\sigma_5(z;\xi) & \sigma_6(z;\xi)
\end{array}\]
\caption{\label{fig:bstate2}Plots of $\sigma_n(z;\xi)$, $n=1,2,\ldots,6$ for $\xi=0.3$ (black), $0.5$ (red), $0.7$ (blue), $0.8$ (green), $0.9$ (purple).}
\end{figure}
}

The bound state solutions $\Sigma_{n}(z)$ of \PIV\ can be given in terms of  Hankel determinants as described in the following theorem.
\begin{theorem}
Define $\tau_n(z;\xi)$ to be the Hankel determinant
\beq\label{def:taun}
\tau_n(z;\xi)=\det\Big[ \ph_{j+k}(z;\xi)\Big]_{j,k=0}^{n-1},\eeq
with $\xi$ a parameter and 
where $\ph_m(z;\xi)$ is given by
\beq\label{def:phin}
\ph_m(z;\xi) = (-\tfrac12\i)^{m}H_m(-\i z)+(-\tfrac12)^{m+1}\xi\exp(-z^2)\deriv[m]{}{z}\left\{\erfc(z)\exp(z^2)\right\},\eeq
with $H_m(z)$ the \textit{Hermite polynomial} and $\erfc(z)$ the \textit{complementary error function}. 
Then
\beq \label{def:wn}
\Sigma_n(z;\xi)= \deriv{}{z} \ln\frac{\tau_{n+1}(z;\xi)}{\tau_n(z;\xi)},\eeq
satisfies \eqref{eq:wn}.
\end{theorem}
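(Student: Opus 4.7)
The plan is to prove the theorem by reducing it to a first-order derivative relation for the sequence $\{\phi_m(z;\xi)\}$, which via Jacobi's (Sylvester--Dodgson) identity yields a Toda-type bilinear equation for the Hankel tau-functions $\tau_n$, and which then translates, through the logarithmic-derivative substitution, into equation \eqref{eq:wn}.

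The first step is to prove the key derivative relation
\[ \deriv{\phi_m}{z}=-m\,\phi_{m-1}(z;\xi),\qquad m\ge 1. \]
For the polynomial part $(-\tfrac12\i)^{m}H_m(-\i z)$, this is immediate from $H_m'(x)=2mH_{m-1}(x)$ and the arithmetic of the prefactor. For the transcendental part, setting $g(z)=\erfc(z)\exp(z^2)$, the elementary identity $g'(z)=2zg(z)-2/\sqrt{\pi}$ together with the Leibniz rule gives $g^{(m+1)}=2z\,g^{(m)}+2m\,g^{(m-1)}$, and hence
\[ \deriv{}{z}\!\Bigl\{(-\tfrac12)^{m+1}\exp(-z^2)\,g^{(m)}\Bigr\}=-m\,(-\tfrac12)^{m}\exp(-z^2)\,g^{(m-1)}, \]
so the two pieces of $\phi_m$ transform identically. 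The base case $n=0$ is then direct: $\tau_0=1$, $\tau_1=\phi_0=1-\tfrac12\xi\,\erfc(z)$, so $(\ln\tau_1/\tau_0)'=\phi_0'/\phi_0$ reproduces the seed \eqref{bstate}.

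Next, Jacobi's identity is applied to the $(n+1)\times(n+1)$ Hankel matrix with determinant $\tau_{n+1}$: deleting the first and last rows and columns produces four bordered minors, three of which are Hankel determinants with shifted indices. The derivative rule of step one allows each such shifted Hankel determinant to be rewritten as a $z$-derivative of $\tau_n$, so that Jacobi's identity collapses to a Toda molecule-type equation
\[ \tau_n\,\deriv[2]{\tau_n}{z}-\left(\deriv{\tau_n}{z}\right)^{2}=c_n\,\tau_{n-1}\tau_{n+1}, \]
together with a companion first-order bilinear identity relating $\tau_n\tau_{n+1}'-\tau_n'\tau_{n+1}$ to a lower-indexed Hankel expression. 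Writing $\Sigma_n:=(\ln\tau_{n+1}/\tau_n)'$ and differentiating, these bilinear identities assemble into \eqref{eq:wn}; the explicit constants $c_n$ force the PIV parameters to be $\alpha=2n+1$ and $\beta=0$, with the vanishing of $\beta$ reflecting the fact that no singular term of the form $\beta/\Sigma$ is produced by the bilinear structure.

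The principal technical obstacle is the bookkeeping in the Jacobi step: one must correctly identify the four bordered minors, apply the derivative rule to express them in terms of $\tau_n$, $\tau_n'$, $\tau_n''$ rather than as shifted Hankel determinants, and track the combinatorial constants so that the coefficients in \eqref{eq:wn} emerge correctly. A clean alternative proceeds by induction on $n$: assuming $\Sigma_n=(\ln\tau_{n+1}/\tau_n)'$ solves \eqref{eq:wn} at level $n$, apply the B\"acklund transformation \eqref{36} and use the same bilinear identities to recognise the transformed function as $(\ln\tau_{n+2}/\tau_{n+1})'$, thus advancing the induction and bypassing direct manipulation of the higher-order derivatives of $\tau_n$.
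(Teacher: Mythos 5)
First, note that the paper does not actually prove this theorem in the text: its proof is the single citation ``See Forrester and Witte \cite{refFW03}'', so your outline is being measured against a deferred proof rather than an in-text argument. Your first step is correct and checks out: with $g(z)=\erfc(z)\exp(z^2)$ one has $g^{(m+1)}=2zg^{(m)}+2mg^{(m-1)}$, and both the Hermite part and the transcendental part of $\ph_m$ satisfy $\ph_m'=-m\,\ph_{m-1}$ for $m\ge1$ (equivalently, by \eqref{def:phin2}, $\ph_m$ is the $m$-th moment of $[1-\xi\mathcal{H}(t-z)]\,\e^{-t^2}$ about $t=z$). The identification of the base case $\tau_1=\ph_0=1-\tfrac12\xi\erfc(z)$ with the seed \eqref{bstate} is also right.

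The gap is in the middle. Because your recurrence is a lowering relation ($\ph_m'=-m\,\ph_{m-1}$) rather than the raising relation $\ph_m'=\ph_{m+1}$ on which the classical Hankel--Wronskian--Toda calculus rests, $\deriv{\tau_n}{z}$ is \emph{not} a single index-shifted Hankel determinant: differentiating column $k$ produces entries $-(j+k)\ph_{j+k-1}$, whose $j$-dependent factor prevents that column from being a multiple of column $k-1$, and in addition $\ph_0'=\xi\e^{-z^2}/\sqrt{\pi}\neq 0$ injects an inhomogeneous rank-one term that your bookkeeping never mentions. Consequently the claim that Jacobi's identity ``collapses'' to $\tau_n\tau_n''-(\tau_n')^2=c_n\tau_{n-1}\tau_{n+1}$ is unsubstantiated and, as stated, not even the right relation: the paper's own remark records $\tau_{n+1}\tau_{n-1}/\tau_n^2=(\ln\tau_n)''+2n$, i.e.\ there is an extra $2n\tau_n^2$ term that your version omits and that is essential for producing the parameter $\a=2n+1$. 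Finally, the step in which ``these bilinear identities assemble into \eqref{eq:wn}'' is the entire analytic content of the theorem and is only asserted; the standard route (and the one underlying the cited Forrester--Witte argument) is to show that $\sigma_n=(\ln\tau_n)'$ satisfies the $\sigma$-form \SIV\ \eqref{eq:Siv} with $\th_0=n$, $\th_\infty=0$, and then pass to $\Sigma_n=\sigma_{n+1}-\sigma_n$, none of which appears in your outline. Your alternative induction via the B\"acklund transformation \eqref{36} is a sound strategy in principle (the base case is a direct verification, and $\mathcal{R}_5$ shifts $\a$ by $2$ while preserving $\b=0$), but the inductive step --- recognising the transformed function as $(\ln\tau_{n+2}/\tau_{n+1})'$ --- is precisely the determinantal identity you would need to prove, so as written it replaces one unproved identity by another.
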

\begin{proof}See Forrester and Witte \cite{refFW03}. \end{proof}

\begin{remark}{\rm Let $\tau_n(z;\xi)$ be the Hankel determinant given by (\ref{def:taun}), then
\beq\label{def:sigman}\sigma_n(z;\xi)=\deriv{}{z} \ln\tau_n(z;\xi),\eeq
satisfies
\[
\left(\deriv[2]{\sigma_n}z\right)^{2} - 4\left(z\deriv{\sigma_n}z-\sigma_n\right)^{2} +4\left(\deriv{\sigma_n}z\right)^2\left(\deriv{\sigma_n}z+2n\right)=0,\]
which is \SIV\ \eqref{eq:Siv} with parameters $\th_0=n$ and $\th_\infty=0$.
The solutions $\Sigma_n(z;\xi)$, given by (\ref{def:wn}), and $\sigma_n(z;\xi)$, given by (\ref{def:sigman}), are related as follows
\[ \Sigma_n(z;\xi) = \sigma_{n+1}(z;\xi)-\sigma_n(z;\xi).\]
Plots of $\sigma_n(z;\xi)$, $n=1,2,\ldots,6$  for various $\xi$ are given in Figure \ref{fig:bstate2}.
Further $\tau_n(z;\xi)$ given by (\ref{def:taun}) 
satisfies 
\[
\frac{\tau_{n+1}(z;\xi)\,\tau_{n-1}(z;\xi)}{\tau_n^2(z;\xi)}=\deriv[2]{}{z} \ln\tau_n(z;\xi) + 2n,\]
which is equivalent to the Toda equation.
}\end{remark}
\begin{lemma}The solutions $\Sigma_n(z;\xi)$, given by (\ref{def:wn}), and $\sigma_n(z;\xi)$, given by (\ref{def:sigman}), are positive and bounded provided that $0<\xi<1$.
\end{lemma}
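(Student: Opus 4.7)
The plan is to represent $\varphi_m(z;\xi)$ as the $m$-th moment of a non-negative weight on $\mathbb{R}$; positivity of the Hankel determinant $\tau_n$ then follows automatically from standard Gram-matrix theory. Using $\erfc(z)\exp(z^{2})=\tfrac{2}{\sqrt{\pi}}\int_{0}^{\infty}\exp(-u^{2}-2uz)\,\d u$, repeated differentiation in $z$ followed by the shift $s=u+z$ recasts the error-function piece of \eqref{def:phin} as $-\tfrac{\xi}{\sqrt{\pi}}\int_{z}^{\infty}(s-z)^{m}\exp(-s^{2})\,\d s$. Coupling this with the representation $(-\tfrac12\i)^{m}H_m(-\i z)=\tfrac{1}{\sqrt{\pi}}\int_{-\infty}^{\infty}(s-z)^{m}\exp(-s^{2})\,\d s$ and then translating $t=s-z$ yields
\[
\varphi_m(z;\xi)=\int_{-\infty}^{\infty}t^{m}\,w(t;z,\xi)\,\d t,\qquad w(t;z,\xi)=\tfrac{1}{\sqrt{\pi}}\exp\bigl(-(t+z)^{2}\bigr)\bigl[1-\xi\chi_{[0,\infty)}(t)\bigr],
\]
where $\chi_{[0,\infty)}$ is the indicator of $t\ge 0$. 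For $0<\xi<1$ the bracket takes the values $1$ or $1-\xi$ and so $w>0$ on all of $\mathbb{R}$. Hence $\tau_n(z;\xi)$ is the Gram determinant of the linearly independent monomials $1,t,\ldots,t^{n-1}$ in $L^{2}(w\,\d t)$ and is consequently strictly positive for every real $z$. In particular $\sigma_n=\d\ln\tau_n/\d z$ and $\Sigma_n=\d\ln(\tau_{n+1}/\tau_n)/\d z$ are real-analytic on $\mathbb{R}$.

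For the sign, I would interpret $h_n(z):=\tau_{n+1}/\tau_n$ as the squared $L^{2}(w)$-norm of the $n$-th monic orthogonal polynomial, giving the variational characterisation $h_n(z)=\min\bigl\{\int q(t)^{2}w(t;z,\xi)\,\d t:q\text{ monic},\ \deg q=n\bigr\}$. The change of variable $s=t+z$ transforms the weight into $\d\nu_z(s)=\tfrac{1}{\sqrt{\pi}}\exp(-s^{2})[1-\xi\chi_{[z,\infty)}(s)]\,\d s$, which for each fixed $s$ is non-decreasing in $z$ because the depressed region $[z,\infty)$ retreats as $z$ grows. Therefore $\int q(s-z)^{2}\,\d\nu_z$ is non-decreasing in $z$ for every fixed monic $q$, and a test-function comparison (evaluating the minimiser at the larger $z$ as a competitor at the smaller) transfers the monotonicity to the minimum $h_n$. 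Consequently $\Sigma_n=\d\ln h_n/\d z\ge 0$, and since $\tau_n=\prod_{k=0}^{n-1}h_k$ one obtains $\sigma_n=\sum_{k=0}^{n-1}\Sigma_k\ge 0$ as well.

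For boundedness, I would appeal to the exponential decay of the bound-state solutions established in the preceding section: continuity on $\mathbb{R}$ together with $\Sigma_n\to 0$ as $|z|\to\infty$ forces $\Sigma_n$ to be bounded, and the finite sum $\sigma_n=\sum_{k=0}^{n-1}\Sigma_k$ inherits this property. A self-contained alternative is to note that $\d\nu_z$ converges by dominated convergence to $\tfrac{1}{\sqrt{\pi}}\exp(-s^{2})\,\d s$ as $z\to+\infty$ and $\tfrac{1-\xi}{\sqrt{\pi}}\exp(-s^{2})\,\d s$ as $z\to-\infty$, giving finite positive endpoint limits for $h_n$. The main obstacle is the positivity of $\Sigma_n$; once the monotone-weight together with the variational comparison is in place, positivity of $\tau_n$, $\sigma_n$ and the boundedness statements all follow routinely.
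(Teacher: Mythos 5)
The paper states this lemma without any proof, so there is nothing to compare line by line; what it does supply, in the remark immediately following, is the integral representation \eqref{def:phin2} of $\ph_m(z;\xi)$ as the $m$-th moment of the weight $\pi^{-1/2}[1-\xi\mathcal{H}(t-z)]\exp(-t^2)$, which is exactly the representation you derive (after the translation $t\mapsto t-z$) and which is clearly the intended basis for the claim. Your argument is sound and fills the gap the authors leave: positivity of the Hankel determinants $\tau_n$ via the Gram-matrix interpretation for the strictly positive weight (valid precisely for $0<\xi<1$), and then non-negativity of $\Sigma_n=\d\ln(\tau_{n+1}/\tau_n)/\d z$ via monotonicity of $h_n=\tau_{n+1}/\tau_n$ in $z$, obtained from the variational characterisation of the monic orthogonal-polynomial norm together with the pointwise monotonicity of the shifted measure $\d\nu_z$ and the translation invariance of the class of monic competitors. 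Three small caveats. First, your intermediate sentence that $\int q(s-z)^2\,\d\nu_z$ is non-decreasing in $z$ ``for every fixed monic $q$'' is not literally correct, since the integrand also moves with $z$; the test-function comparison you then perform is the correct statement and should replace it. Second, the argument yields $\Sigma_n\ge 0$ rather than strict positivity; this is in fact all that is true (the paper itself records that $\Sigma_n$ has $n$ zeros for $n\ge1$), so ``positive'' in the lemma must be read as non-negative. Third, your self-contained alternative for boundedness only shows that $h_n$ has finite positive limits at $\pm\infty$, which bounds $\ln h_n$ but not automatically its derivative $\Sigma_n$; your primary route, invoking the exponential decay of the $\Sigma_n$ asserted earlier in the section, is the one to rely on (and $\sigma_n=\sum_{k=0}^{n-1}\Sigma_k$ then inherits boundedness from the finite sum, as you say).
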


\begin{remark}{\rm Since the {Hermite polynomial} $H_m(z)$ has the integral representation 
\[H_m(z)=\frac{2^m}{\sqrt{\pi}}\int_{-\infty}^{\infty}(z+\i t)^m\exp(-t^2)\,\d t,\]
and the {complementary error function} $\erfc (z)$ has the integral representation
\[ 
\erfc (z)=\frac{2}{\sqrt{\pi}} \int_z^\infty \exp (-t^2) \,\d t,\]
then it can easily be shown that $\ph_m(z;\xi)$ given by (\ref{def:phin}) has the integral representation
\beq\label{def:phin2}
\ph_m(z;\xi) =\frac{1}{\sqrt{\pi}}\int_{-\infty }^{\infty }\left[ 1-\xi \mathcal{H}(t-z) \right](t-z)^{m} \exp(-t^2)\,\d t,
\eeq
with $\mathcal{H}(x)$ the \textit{Heaviside function}.
}\end{remark}

These bound state solutions arise in the theory of:
\begin{itemize} 
\item[(i)] orthogonal polynomials with the discontinuous Hermite weight 
\begin{equation*}
w(x;z,\mu)=\exp(-x^2)\left\{1-\mu +2 \mu\mathcal{H}(x-z)\right\},\end{equation*}
with $\mathcal{H}(x)$ the Heaviside function and $\mu$ a parameter, see 
\cite{refCP05}; and 
\item[(ii)] GUE random matrices which are expressed as Hankel determinants of the function $\ph_m(z;\xi)$ given by (\ref{def:phin2}), see
\cite{refFW03}.
\end{itemize}

{\begin{remark}{\rm The bound state solutions discussed here 
are members of a general family of solutions of \PIV\ \eqref{eq:PIV}  with $\b=0$ and the boundary condition
\[ \omega(z)\to 0,\qquad\text{as}\quad z\to\infty,\]
which have been studied by various authors
\cite{bch93,bchm92,refIK98,refWZ09}
}\end{remark}}
\section{Conclusion}

Here, integrable systems of a novel hybrid \Ep\ IV kind have been introduced which both admit a Ermakov invariant and are connected to \PIV\ \eqref{b8}. In terms of applications, it is noted that particular such nonlinear coupled systems arise out of symmetry reduction of coupled derivative nonlinear \sch\ systems \cite{cr2014}. Hybrid integrable \Ep\ II systems with genesis in the context of three-ion electrodiffusion and the Nernst-Planck system have been recently developed in \cite{crws16,crws2016}. This suggests that a comprehensive investigation be undertaken into integrable \Ep\ systems which both possess characteristic Ermakov invariants and allow the construction of classes of exact solutions via \bts\ admitted by the canonical \PII-\PVI\ equations. {In most recent work \cite{cr2017}, an integrable Ermakov-Painlev\'e III system has been constructed in a manner directly analogous to that used to obtain integrable hybrid Ermakov-Painlev\'e II and Ermakov-Painlev\'e IV systems. Importantly, these Ermakov-Painlev\'e systems may be now embedded in multi-component Ermakov-Painlev\'e and Ermakov-Toda lattice schemes as introduced in \cite{crws17}. This method of construction of the hybrid Ermakov-Painlev\'e systems has potential specific application yet to be undertaken for residual Ermakov-Painlev\'e V and Ermakov-Painlev\'e VI systems.}

{\section*{Acknowlegement}We thank the reviewers for their helpful comments.}

 \def\bibitm{\vspace{-0.2cm}\bibitem}
\def\refpp#1#2#3#4#5{\bibitm{#1} \textrm{\frenchspacing#2}\ #5\ \textrm{#3} #4}

\def\refjl#1#2#3#4#5#6#7{\bibitm{#1} \textrm{\frenchspacing#2}, \textrm{#3},
{\frenchspacing\it#4}, \textbf{#5}\ (#7) #6.}

\def\refbk#1#2#3#4#5{\bibitm{#1} \textrm{\frenchspacing#2}, \textit{#3}, #4, #5.}

\def\refcf#1#2#3#4#5#6#7{\bibitm{#1} \textrm{\frenchspacing#2}, {#3},
 \textit{#4}, {\frenchspacing#5}\ #7, #6}

\def\JPA{J. Phys. A}
\def\CUP{Cambridge University Press}

\end{document}